\newcommand{\RN}[1]{%
  \textup{\uppercase\expandafter{\romannumeral#1}}%
}
\newcommand{\pl}{\partial}
\newcommand{\rw}{\rightarrow}
\newcommand{\Lb}{\underline{L}}
\newcommand{\chib}{\underline{\chi}}
\newcommand{\R}{\mathbb{R}}
\newcommand{\lt}{\left}
\newcommand{\rt}{\right}
\newcommand{\na}{\nabla}
\renewcommand{\hat}{\widehat}
\newcommand{\tr}{\mbox{tr}}
\newtheorem{theorem}{Theorem}
\newtheorem{prop}[theorem]{Proposition}
\theoremstyle{definition}
\newtheorem{definition}[theorem]{Definition}
\newtheorem{remark}[theorem]{Remark}
\title[Supertranslation invariance in double null gauge]{Supertranslation invariance of angular momentum at null infinity in double null gauge}
\author[P.-N. Chen, M.-T. Wang, Y.-K. Wang, and S.-T. Yau]{Po-Ning Chen, Mu-Tao Wang,\\ Ye-Kai Wang, and Shing-Tung Yau}
\numberwithin{theorem}{section}
\numberwithin{equation}{section}
\begin{document}

\begin{abstract} 
The supertranslation invariance of the Chen-Wang-Yau (CWY) angular momentum in the Bondi-Sachs formalism/gauge was ascertained by the authors in \cite{CKWWY_evol, CWWY_atmp}. In this article, we study the corresponding problem in the double null gauge. In particular, supertranslation ambiguity of this gauge is identified and the CWY angular momentum is proven to be free of this ambiguity. A similar result is obtained for the CWY center of mass integral. 
\end{abstract}

\thanks{P.-N. Chen is supported by NSF grant DMS-1308164 and Simons Foundation collaboration grant \#584785, M.-T. Wang is supported by NSF grants DMS-1810856 and DMS-2104212, Y.-K. Wang is supported by MOST Taiwan grant 107-2115-M-006-001-MY2 and 109-2628-M-006-001 -MY3, and S.-T. Yau is supported by NSF grants  PHY-0714648 and DMS-1308244.} 

\maketitle

\section{}
The friendship of Professor Christodoulou and Yau started in 1981. In the first months of his commute from Syracuse to the Institute of Advanced Study, Christodoulou stayed in Yau's apartment. He also sat in Yau's course on minimal surface theory and applications for two years \cite{C_reminiscences}. Christodoulou eventually applied geometric analysis to hyperbolic partial differential equations, culminating first in the monumental proof of stability of the Minkowski spacetime with Klainerman \cite{CK} and later in the formation of black holes in general relativity \cite{C_formation} and the formation of shocks in fluid mechanics \cite{C_shock}. We take this opportunity to celebrate Christodoulou's achievement and the friendship that has lasted more than 40 years.

In a series of papers in the 1960's, Bondi \cite{Bondi}, Bondi-van der Burg-Metzner \cite{BVM}, and Sachs \cite{S} provided one of the first convincing theoretical evidences for gravitational radiation. Assuming an asymptotically Minkowskian isolated system admits a coordinate system adapted to an optical function $u$ and a luminosity distance $r$, Bondi et. al. were able to solve the vacuum Einstein equation in power series of $r^{-1}$, defined the energy for the $u=$const. null hypersurfaces
\[ E(u) = \frac{1}{4\pi} \int_{S^2} m, \]
and derived the {\it Bondi mass loss formula}\footnote{The function $m$ and 2-tensor $N_{AB}$ appear in the Taylor expansion of metric coefficients. See Section 4.1 for their definition.}
\[ \pl_u E = -\frac{1}{32\pi} \int_{S^2} N_{AB}N^{AB} \le 0, \]   
which is interpreted as saying an isolated system emits gravitational waves that carries mass away from the system.

The Bondi-Sachs coordinate-based approach (henceforth referred to as Bondi-Sachs formalism \cite{MW}) was incorporated into Penrose's conformal treatment of null infinity. However, it is difficult to construct a large class of spacetimes that admit conformal compactifications (or Bondi-Sachs coordinates). It is in the work of stability of Minkowski spacetime \cite{CK} that Christodoulou-Klainerman gives a rigorous treatment of null infinity in the setting of initial value problems: a detailed asymptotic decay estimate for Ricci coefficients and curvature components are derived, the Bondi mass is defined by the limit of Hawking mass, and the mass loss formula is rigorously proved. The work also leads to the discovery of the Christodoulou memory effect \cite{Christo1991}.

Defining angular momentum for gravitational fields turns out to be more subtle. In special relativity, associated with a particle (represented by a curve $\gamma$ in $\R^{3,1}$) are the conserved quantities
\begin{enumerate}
\item energy $E= \langle \gamma', \frac{\pl}{\pl t} \rangle$
\item linear momentum $P^i = \langle \gamma', \frac{\pl}{\pl x^i} \rangle$
\item angular momentum $ J_{ij}= \langle \gamma', x^i \frac{\pl}{\pl x^j} - x^j \frac{\pl}{\pl x^i} \rangle$
\item center of mass $C^i = \langle \gamma', t \frac{\pl}{\pl x^i} + x^i \frac{\pl}{\pl t}\rangle$
\end{enumerate}
If one changes the coordinate system by a translation $\bar t = t + \alpha^0, \bar x^i= x^i + \alpha^i$, energy and linear momentum remain the same but the angular momentum and center of mass transform as
\begin{align}\label{SR}
\begin{split}
\bar J_{ij} &= J_{ij} + \alpha^i P^j - \alpha^j P^i, \\
\bar C^i &= C^i + \alpha^0 P^i + \alpha^i E.
\end{split}
\end{align}
We conclude that the angular momentum and center of mass have a ``4-dimensional translation ambiguity" that comes from the choice of the origin.

In Bondi-Sachs formalism, the symmetry group of null infinity is given by coordinate transformations that preserve the form of metric tensors and asymptotics. Although structurally similar to the Poincar\'e group, it is the semi-direct product of the Lorentz group and an infinite-dimensional abelian group---{\it supertranslations}---instead of the 4-dimensional group of translations. The transformation law of all existing proposals of angular momentum under supertranslations bears no resemblance to \eqref{SR} and physicists have yet to find an interpretation. This is the ``supertranslation ambiguity" that has puzzled researchers in the field since the 1960s.

The purpose of this article is to explain the supertranslation invariance of Chen-Wang-Yau angular momentum, and center-of-mass integral \cite{CKWWY_evol, CWWY_atmp} with the main results presented in the double null gauge rather than the Bondi-Sachs formalism as in \cite{CKWWY_evol, CWWY_atmp}.

In Section 2, the definitions of the Chen-Wang-Yau quasi-local angular momentum and center-of-mass integral are given. They are extensions of the Wang-Yau quasi-local mass \cite{Wang-Yau1, Wang-Yau2} and defined for spacelike 2-surfaces in spacetimes. The key idea is to isometrically embed the surface into the Minkowski spacetime and then pull back the Killing vector fields along the surface. This resolves two difficulties encountered in the Hamiltonian approach to define conserved quantities in general relativity: the lack of background coordinate systems and the absence of symmetries. We take the limits, as $r \rw \infty$ of quasi-local Chen-Wang-Yau angular momentum and center-of-mass on $r=$const. surfaces to get the corresponding Chen-Wang-Yau quantities for each $u=$const. null hypersurface. 

In Section 3.1, we present A. Rizzi's definition of angular momentum at null infinity. It is defined in the framework of stability of Minkowski spacetime and is proposed in Rizzi's thesis supervised by Christodoulou. In Section 3.2, we discuss a conservation law of angular momentum that is related to Christodoulou's conservation law of linear momentum \cite{Christo1991}.      

The Bondi-Sachs formalism is recalled in Section 4.1, and supertranslations are defined. For the rest of the article, we focus on the double null gauge, which is familiar to the readers of Christodoulou's work. In Section 4.2, we define supertranslations in this context, compute the transformations of Ricci coefficients, and discuss Rizzi's attempt to restore the 4-dimensional dependence of origin. The supertranslation ambiguity is explained for the total flux of Rizzi's angular momentum, which is equal to the (classical) Ashtekar-Streubel definition \cite{AS} in Section 4.3. In the final Section 4.4, we show that the Chen-Wang-Yau angular momentum and center-of-mass integral transform according to \eqref{SR} under supertranslations, which is the main result in \cite{CKWWY_evol, CWWY_atmp} and is presented here in the double null gauge. 
  
\begin{definition}
Throughout this article, we denote the standard metric on $S^2$ by $\sigma$. More precisely, $\sigma = d\theta^2 + \sin^2\theta d\phi^2$ in spherical coordinates. We raise and lower indices with respect to $\sigma$. Let $\na$ and $\Delta$ be the covariant derivative and Laplace operator with respect to $\sigma$. If the volume form is omitted in the integrals $\int_{S^2} F$, it is taken with respect to the volume form of $\sigma$. We also need the spherical harmonics decomposition. Let $\mathcal{H}_{\ell\le 1}$ denote the space of functions spanned by $1, \tilde X^1 = \sin\theta\cos\phi, \tilde X^2 = \sin\theta\sin\phi, \tilde X^3 = \cos\theta$ and $\mathcal{H}_{\ell\ge 2}$ denote the space of functions supported in $\ell\ge 2$ modes.
\end{definition}

\section{Chen-Wang-Yau quasi-local angular momentum and center of mass}
Given a spacelike 2-surface $\Sigma$ in the spacetime, we get data $(\slashed{g}, |H|, \alpha_H)$ where $\slashed{g}$ is the induced metric, $|H|$ the norm of the mean curvature vector (which is assumed to be spacelike), $\alpha_H$ the connection 1-form of the normal bundle of $\Sigma$. In terms of Ricci coefficnets \cite[page 1486]{Christo1991},  
\begin{align*}
|H| = \sqrt{-\tr\chi \tr\chib}, \quad \alpha_H = \zeta - \frac{1}{2} d\log \tr\chi + \frac{1}{2} d \log (- \tr\chib)  
\end{align*}
are independent of the scaling $L \rw a L, \Lb \rw a^{-1} \Lb$ of the two null normal vector fields along $\Sigma$.

Consider an isometric embedding $X:(\Sigma, \slashed{g}) \rw \R^{3,1}$ into the Minkowski spacetime. For the image surface $X(\Sigma)$, we can compute the norm of the mean curvature vector $H_0$ and connection 1-form $\alpha_{H_0}$. For a constant timelike unit vector field $T_0$, viewed as an observer, let $\tau = - \langle X, T_0 \rangle$. Define the mass density function and momentum density one-form by
\begin{align*}
\bm\rho &= \frac{\sqrt{|H_0|^2 + \frac{(\Delta \tau)^2}{1+|\nabla \tau|^2}} - \sqrt{|H|^2 + \frac{(\Delta \tau)^2}{1+|\nabla \tau|^2}}}{\sqrt{1+|\nabla\tau|^2}} \\
\bm j &= - \bm\rho\nabla \tau + \nabla\left[\sinh^{-1}\left(\frac{\bm\rho\Delta\tau}{|H_0||H|}\right)\right] +\alpha_{H_0} - \alpha_{H}
\end{align*} 
Note that the sign convention of $\bm j$ is opposite to \cite{CWY_ATMP, KWY} but coincides with \cite[Proposition 7.1]{CWY_conserved}.

There are many such choices of isometric embedding $X$ and observer $T_0$ in the Minkowski spacetime. We consider only those pairs $(X, T_0)$ with associated data satisfying the optimal isometric embedding equation
\begin{equation}\label{oiee}
\slashed{div} \bm j = 0.
\end{equation}

We write $K_{ij} = x^i \frac{\pl}{\pl x^j} - x^j \frac{\pl}{\pl x^i}$ for the rotation Killing vector fields and $K_i = t \frac{\pl}{\pl x^i} + x^i \frac{\pl}{\pl t}$ for the boost Killing vector fields; together they form a basis of the Lorentz algebra. Here is the definition of Chen-Wang-Yau quasi-local conserved quantities \cite[Definition 2.2]{CWY_conserved}.
\begin{definition}\label{qlCharge}
The quasi-local conserved quantity of $\Sigma$ with respect to an optimal isometric embedding $(X,T_0)$ and a Killing field $K$ in Minkowski spacetime is
\begin{align}
E(\Sigma, X,T_0, K) = \frac{1}{8\pi}\int_\Sigma - \langle K,T_0 \rangle \bm\rho + \bm j(K^\top) d\Sigma 
\end{align}
where $K^\top$ denotes the projection of a Lorentz Killing field $K$ onto the tangent space of the image $X(\Sigma)$.

Suppose $T_0 = A \lt( \frac{\pl}{\pl t} \rt)$ for a Lorentz transformation, then the quasi-local conserved quantities corresponding to $A(K_{ij})$ are called the quasi-local angular momentum with respect to $T_0$ and the quasi-local conserved quantities corresponding to $A(K_i)$ are called the quasi-local center of mass integrals with respect to $T_0$.
\end{definition}

Chen-Wang-Yau conserved quantities satisfy the following properties: 
\begin{enumerate}
\item They vanish for any spacelike 2-surface in Minkowski spacetime.
\item The quasi-local angular momentum coincides with the Komar integral for an axially symmetric 2-surface in an axially symmetric spacetime.
\end{enumerate}

Given an asymptotically flat initial data set $(M,g,k)$, the limits of Chen-Wang-Yau conserved quantities of the coordinate spheres $\Sigma_r$ give rise to the {\it total Chen-Wang-Yau conserved quantities}. On the one hand, the total energy-momentum recovers the ADM energy-momentum 4-vector \cite{WY_spatial}. On the other hand, the total angular momentum and center of mass integrals agree with ADM angular momentum and Regge-Teitelboim center of mass for harmonic asymptotes when the linear momentum vanishes but are different in general \cite{CWY_sdg}.  

Moreover, the following properties also hold: 
\begin{enumerate}
\item The total angular momentum of any spacelike hypersurface of Kerr spacetime is an invariant.
\item Let $(M, g(t),k(t))$ be a solution for the vacuum Einstein equation with lapse function $N = 1 + O(r^{-1})$ and shift vector $\gamma = \gamma^{(-1)} + O(r^{-2})$, center of mass integrals $C^i(t)$ and the total angular momentum $J^i(t)$ of $(M,g(t),k(t))$ satisfy 
\begin{align}\label{p=mv}
\begin{split}
\pl_t C^i(t) = \frac{p^i}{e}\\
\pl_t J^i(t) =0
\end{split}
\end{align}
where $(e,p^i)$ is the ADM energy-momentum 4-vector.
\end{enumerate}

In his lecture given at ETH, Christodoulou proposed definitions of angular momentum and center-of-mass integral for {\it strongly} asymptotically flat initial data sets \cite[Section 3.3, 3.4]{C_lecture} with the dynamical formula \eqref{p=mv} proved. While his angular momentum agrees with ADM's, the center of mass integral is new.
\section{Rizzi's definition of angular momentum}\subsection{Double null gauge and Rizzi's definition}
Suppose the spacetime metric takes the form
\begin{align}
g = -4\Omega^2 du dv + \slashed{g}_{AB}(dx^A - b^A dv)(dx^B - b^B dv)
\end{align}
in a double null gauge where we assume $v \in (v_0, +\infty)$ and $u \in (-\infty,+\infty)$.

The $u,v$ level sets $C_u, \underline{C}_v$ are null hypersurfaces intersecting along 2-susrfaces $S_{u,v}$. Let
\begin{align*}
r(u,v) = \sqrt{ \frac{\mbox{Area}(S_{u,v})}{4\pi} }
\end{align*}
be the areal radius. 

Associated to the double null gauge is a frame
\begin{align}\label{frame}
e_A = \frac{\pl}{\pl x^A}, \quad e_3 = \Omega^{-1}\frac{\pl}{\pl u}, \quad e_4 = \Omega^{-1}\lt( \frac{\pl}{\pl v} + b^A \frac{\pl}{\pl x^A} \rt).
\end{align} We introduce the Ricci coefficients
\begin{align*}
\chi_{AB} &= g( D_A e_4, \pl_B ), \quad
\chib_{AB} =g( D_A e_3, \pl_B ),\\
\zeta_A &= \frac{1}{2} g( D_A e_4, e_3 ),
\end{align*}
the curvature components
\begin{align*}
\underline\alpha_{AB} &= R(e_3, \pl_A, e_3, \pl_B),\\
\beta_A &= \frac{1}{2} R(\pl_A, e_4, e_3, e_4),\quad 
\underline\beta_A =\frac{1}{2} R(\pl_A, e_3, e_3, e_4),\\ 
\rho &= \frac{1}{4}R(e_3,e_4,e_3,e_4), \quad \sigma = \frac{1}{4} \mbox{}^*R(e_3,e_4,e_3,e_4),
\end{align*}
and
\begin{align*}
\widehat{\chi}_{AB} = \chi_{AB} - \frac{1}{2} (\tr_\slashed{g}\chi) \slashed{g}_{AB}, \quad
\widehat{\chib}_{AB} = \chib_{AB} - \frac{1}{2} (\tr_\slashed{g}\chib) \slashed{g}_{AB}.
\end{align*}

We also take the salient points established in \cite{CK} as assumptions: along each null hypersurface $C_u$, the limits
\begin{align}
\lim_{r\rw\infty} \hat\chi_{AB} &= \Sigma_{AB}, \quad 
\lim_{r\rw\infty} r^{-1}\hat\chib_{AB} = \Xi_{AB}, \label{def_Sigma_Xi}\\
\lim_{r\rw\infty} r^{-1} \underline\alpha_{AB} &= \underline{A}_{AB}, \quad
\lim_{r\rw\infty} r\underline\beta_A = \underline{B}_A,\label{def_A_B}\\
\lim_{r\rw\infty} r^3 \rho &= P, \label{def_P}\\ 
\lim_{r\rw\infty} r^3 \sigma &= Q 
\end{align}
exist.
 
Let $Y_{(i)}, i=1,2,3$ be the rotation Killing vector fields on $S^2$, \cite[Figure 3]{Rizzi_PRL}. Rizzi's definition of angular momentum reads \cite[(2)]{Rizzi_PRL}
\begin{align}
L(Y_{(i)}) = \frac{1}{8\pi} \lim_{v \rw \infty} \int_{S_{u,v}} \zeta_A Y^A_{(i)} d S_{\slashed g}.
\end{align}
Motivated by the Kerr spacetime, Rizzi assumes that
\begin{align}\label{assumption_beta}
\lim_{r \rw \infty} r^3\beta_A = I_A
\end{align}
exists along each null hypersurface $C_u$. By the propagation equation of $\zeta$, one obtains \cite[(3)]{Rizzi_PRL}
\begin{align}\label{Rizzi_def_beta}
L(Y_{(i)}) = \frac{1}{8\pi} \int_{S^2} \lt( \Sigma_{AB} \na_C \Sigma^{CB} + I_A \rt) Y^A_{(i)}. 
\end{align}

Besides Rizz's definition, Klainerman-Szeftel defined an angular momentum for general covariantly modulated (GCM) spheres \cite{KS} in their proof of nonlinear stability of Kerr blackhole for $|a| \ll m$.

\subsection{A conservation law for angular momentum}
Assumption \eqref{assumption_beta} is not compatible with the decay obtained in \cite{CK}
\[ \beta = o(r^{-\frac{7}{2}}) \]
(the tangential vector $\pl_A$ has scale $r$). In \cite{ChristoMG9}, Christodoulou observed  that the Bianchi identity implies that
\begin{align*}
R = \lim_{C_u^+, r \rw \infty} r^4 \underline{D}\beta
 \end{align*}
 exists. If one makes a physically reasonable assumption
\[ \lim_{u \rw -\infty} uR = R^- \neq 0,\]
then Christodoulou derived that \cite[(5)]{ChristoMG9}
\[ \beta = B_* r^{-4} \log r + B r^{-4} + o(r^{-4}) \]
uniformly in $u$ with 1-forms $B_*$ and $B$ on $S^2$ satisfying \cite[(6)]{ChristoMG9}
\begin{align*}
\frac{\pl B_*}{\pl u} &=0\\
\frac{\pl B}{\pl u} &= R.
\end{align*}

\begin{definition}
For a function $f$ on $S^2$, we denote the projection of $f$ on the sum of zeroth and first eigenspaces of $\Delta$ by $f_{[1]}$. Namely, $f_{[1]} = f_{\ell\le 1} = f_{\ell=0} + f_{\ell=1}$. For a 1-form $\omega_A = \na_A f + \epsilon_{AB} \na^B g$ on $S^2$, we denote $\omega_{A[1]} = \na_A f_{\ell=1} + \epsilon_{AB}\na^B g_{\ell=1}$.
\end{definition}

In view of \eqref{Rizzi_def_beta}, we  \cite[Section 7.1]{CKWWY_evol} extend the analysis in \cite{ChristoMG9} to get a conservation law of angular momentum
\begin{align}
\begin{split}
&\lim_{u\rw\infty} \lt( I_A - \Sigma_{AB}\na_C \Sigma^{CB} \rt)_{[1]}- \lim_{u\rw-\infty} \lt( I_A - \Sigma_{AB}\na_C \Sigma^{CB} \rt)_{[1]} \\
&= \frac{1}{2} \int_{-\infty}^\infty  -\na_A P_{\ell=1}- \epsilon_{AB} \na^B Q_{\ell=1} +\lt(  \Xi_{AB} \na_C \Sigma^{CB} - \Sigma_{AB} \na_C \Xi^{CB} \rt)_{[1]} du
\end{split}\end{align} 
where $(P,Q) = \lim_{r \rw \infty} (r^3\rho, r^3\sigma)$.

This complements the conservation law of linear momentum \cite[(13)]{Christo1991}
\[P^+_{(1)} - P^-_{(1)} = \frac{1}{4} \int_{-\infty}^{+\infty} |\Xi|^2 du, \]
which is an integrability condition for Christodoulou's memory effect.

\section{Chen-Wang-Yau angular momentum and center of mass integral at null infinity}

\subsection{Bondi-Sachs formalism}
We briefly review the Bondi-Sachs formalism and refer the readers to the excellent survey \cite{MW} for more details.

If we set $u = t-r$, the metric tensor of the Minkowski spacetime becomes
\[ - du^2 -2dudr + r^2 \sigma_{AB} dx^A dx^B. \]
Similarly, the Schwarzschild metric in Eddington-Finkelstein coordinates is
\[ - (1 - \frac{2m}{r})du^2 - 2 dudr + r^2 \sigma_{AB} dx^A dx^B \]

Taking the above two examples as models, Bondi and his collaborators postulated that the spacetime admits a coordinate system (Bondi-Sachs coordinates) $(u,r,x^A)$ where $u$ is an optical function, $r$ is the ``luminosity distance" from the source, and $x^A, A=2,3$ are the coordinates of the spherical section; $r \in (r_0, \infty), u \in (u_0,u_1)$. In Bondi-Sachs coordinates, the metric tensor takes the form
\begin{align}\label{BS}
-UV du^2 - 2 U dudr + r^2 h_{AB}(dx^A + W^A du)(dx^B + W^B du).
\end{align}

Assuming the spacetime is asymptotically Minkowskian, the metric coefficients satisfy \begin{align}\label{boundary condition}
U \rw 1, V \rw 1, h_{AB} \rw \sigma_{AB}, W^A \rw 0
\end{align} as $r \rw \infty$.

Bondi and his collaborators make two more assumptions. The first is a determinant condition $\det (h_{AB}) = \det (\sigma_{AB})$. The second is the ``outgoing radiation condition" that all metric coefficients can be expanded into power series of $\frac{1}{r}$ with coefficients being functions of $u, x^A$.

The null vacuum Einstein constraint equations then enjoy a remarkable hierarchy and all metric coefficients can be determined term-by-term
\[\begin{split} U&=1 - \frac{1}{16r^2} |C|^2 + O(r^{-3}),\\
V&=1-\frac{2m}{r}+ \frac{1}{r^2}\lt( \frac{1}{3}\na^A N_A + \frac{1}{4} \na^A C_{AB} \na_D C^{BD} + \frac{1}{16}|C|^2 \rt) + O(r^{-3}),\\
W^A&= \frac{1}{2r^2} \na_B C^{AB} + \frac{1}{r^3} \lt( \frac{2}{3}N^A - \frac{1}{16} \na^A |C|^2 -\frac{1}{2} C^{AB} \na^D C_{BD} \rt) + O(r^{-4}),\\
h_{AB}&={\sigma}_{AB}+\frac{C_{AB}}{r}+ \frac{1}{4r^2} |C|^2 \sigma_{AB} + O(r^{-3})\end{split} \] where  $m=m(u, x^A)$ is the mass aspect, $N_A = N_A(u,x^A)$ is the angular aspect and $C_{AB}=C_{AB}(u, x^A)$ is the shear tensor of this Bondi-Sachs coordinate system. 

Bondi-Sachs extract physical information from the coefficients of expansion; the Bondi-Sachs energy momentum 4-vector $(E,P^k)$  for a $u = const.$ hypersurface is given by
\begin{equation}\begin{split} E(u) &= \frac{1}{4\pi}\int_{S^2} m\\
 P^k(u) &= \frac{1}{4\pi} \int_{S^2} m \tilde{X}^k, k=1, 2, 3\end{split}\end{equation}
where $\tilde{X}^k, k=1, 2 ,3$ are the standard coordinate functions on $\R^3$ restricted to the unit sphere $S^2$. 

From the vacuum Einstein equations,
\begin{align}\label{evolution of mass aspect}
\pl_u m = -\frac{1}{8}N_{AB}N^{AB} + \frac{1}{4} \na^A\na^B N_{AB},
\end{align}
one obtains the famous Bondi mass loss formula
\begin{align}
\pl_u E = -\frac{1}{32\pi} \int_{S^2} N_{AB}N^{AB}.\label{mass loss}
\end{align}
Bondi-Sachs formalism is a coordinate-based approach and one should take all coordinate transformations that preserve the form of the metric tensor \eqref{BS} and the asymptotic conditions \eqref{boundary condition} into consideration. One such coordinate transformation is {\it supertranslation}: any smooth function $f(x)$ on $S^2$ gives rise to a coordinate transformation
\begin{align}\label{supertranslation}
\begin{split}
\bar u &= u + f(x) + \frac{\bar u^{(-1)}(u,x)}{r} + o(r^{-1}),\\
\bar r &= r + \bar r^{(0)}(u,x) + o(1),\\
\bar x^A &= x^A + \frac{\bar x^{A(-1)}(u,x)}{r} + o(r^{-1})
\end{split}
\end{align}
where $\bar u^{(-1)}, \bar r^{(0)}, \bar x^{A(-1)}$ are determined by $f$ \cite{BVM}. 

To end this subsection, we note that the limits of the Ricci coefficients with respect to null frame
\[ e_4 = \pl_r, \quad e_3 = \frac{2}{U}\lt( \pl_u - W^D \pl_D - \frac{V}{2}\pl_r \rt)\]
are given by \cite[Appendix A]{CKWWY_evol}
\begin{align}
\begin{split}\label{shear, news}
\Sigma_{AB} &= -\frac{1}{2}C_{AB},\\
\Xi_{AB} &= N_{AB}.
\end{split}
\end{align} 

\subsection{Supertranslation in double null gauge}
Suppose the spacetime metric takes the form
\begin{align}
g = -4\Omega^2 du dv + \slashed{g}_{AB}(dx^A - b^A dv)(dx^B - b^B dv)
\end{align}
in a double null gauge where we assume $v \in (v_0, +\infty)$ and $u \in (-\infty,+\infty)$.
We assume that
\begin{align}
\Omega &= 1 + o(1), \quad \pl_A\Omega = o(1) \label{AF,Omega}\\
\slashed{g}_{AB} &= r^2 \sigma_{AB} + O(r)\\
b^A &= O(r^{-2})\label{AF,b}
\end{align}
along each outgoing null hypersurface $C_u$. Moreover, assume \eqref{def_Sigma_Xi}, \eqref{def_A_B}, \eqref{def_P} hold.
\begin{remark}
Since $\frac{\pl b^A}{\pl u} = 4\Omega^2 (\slashed{g}^{-1})^{AB}\zeta_B$ \cite[(1.199)]{C_formation},  assumption \eqref{AF,b} is satisfied if it is satisfied on some $C_u$ and the decay $\zeta_A = O(r^{-1})$ holds. 
\end{remark}

Given a change of double null gauge 
\begin{align*}
u = f(\tilde u, \tilde x), \quad
v = \tilde v, \quad
x^A = \tilde x^A,
\end{align*}
the metric tensor is transformed into
\[ g = -4\Omega^2 \frac{\pl f}{\pl\tilde u} d\tilde u d\tilde v - 4\Omega^2 \frac{\pl f}{\pl\tilde x^A} d\tilde x^A d\tilde v + \slashed{g}_{AB}(d\tilde x^A - b^A d\tilde v)(d\tilde x^B - b^B d\tilde v) \]
and remains in the double null gauge. In order to preserve \eqref{AF,Omega}, it is necessary $\frac{\pl f}{\pl \tilde u} = 1$ and hence the two optical functions are related by
\begin{align}
u = \tilde u + \frac{f(x)}{2} \label{str_double null}
\end{align}
for some function $f$ on $S^2$. The discrepancy of \eqref{str_double null} and \eqref{supertranslation} comes from the fact the optical function $u$ used in double null gauge is equal to $\frac{t-r}{2}$ in Minkowski spacetime. From now on we identify $\tilde v, \tilde x^A$ with $v, x^A$ and call \eqref{str_double null} supertranslations in double null gauge.

We compute how the Ricci coefficients transform. 
\begin{prop}
Under a supertranslation in double null gauge $u = \tilde u + \frac{f(x)}{2}$, the limits of traceless second fundamental forms transform as
\begin{align}
\tilde\Sigma_{AB}( \tilde u, x) &= \Sigma_{AB} \lt( \tilde u + \frac{f(x)}{2}, x \rt) + \na_A\na_B f - \frac{1}{2}\Delta f \,\sigma_{AB}, \label{str_Sigma}\\
\tilde\Xi_{AB}(\tilde u, x) &= \Xi_{AB}\lt( \tilde u + \frac{f(x)}{2}, x \rt) \label{str_Xi},
\end{align}
and the limit of curvature component $\rho$ \eqref{def_P} transform as
\begin{align}\label{str_P}
\tilde P(\tilde u, x) = P -\na^A f \underline{B}_A  + \frac{1}{4} \na^A f\na^B f \underline{A}_{AB}
\end{align}
where the the curvature components on right-hand side are evaluated at $\lt(\tilde u + \frac{f}{2}, x \rt)$.
\end{prop}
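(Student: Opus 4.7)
The plan is to derive the three transformation laws by computing the new null frame $(\tilde e_3,\tilde e_4,\tilde e_A)$ adapted to the supertranslated foliation $S_{\tilde u,v}$ and then reading off the Ricci/curvature components at the appropriate orders in $r$. From $u=\tilde u+f(x)/2$ one has $\partial_{\tilde u}=\partial_u$, $\partial_{\tilde v}=\partial_v$, and $\partial_{\tilde x^A}=\partial_{x^A}+\tfrac12\partial_A f\,\partial_u$, so the tangential frame on $S_{\tilde u,v}$ is $\tilde e_A=e_A+\tfrac{\Omega}{2}\partial_A f\,e_3$. Setting $\tilde e_3=e_3$ at leading order and imposing nullness of $\tilde e_4$, orthogonality $g(\tilde e_4,\tilde e_A)=0$, and $g(\tilde e_3,\tilde e_4)=-2$, one solves
\[
\tilde e_4=e_4+\Omega\slashed g^{AB}\partial_B f\,e_A+\tfrac14\Omega^2\slashed g^{AB}\partial_A f\,\partial_B f\,e_3,
\]
modulo $O(r^{-2})$ corrections from the fact that the naive $\tilde u$ is null only up to subleading terms in $r^{-1}$ (in analogy with the Bondi--Sachs supertranslation \eqref{supertranslation}); these corrections do not affect the limits below.

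For \eqref{str_Xi}, I would compute $\tilde\chib_{AB}=g(D_{\tilde e_A}\tilde e_3,\tilde e_B)$. Since $\tilde e_3=e_3$ and $g(D_{e_A}e_3,e_3)=0$, the only $O(r)$ piece is $g(D_{e_A}e_3,e_B)=\chib_{AB}$; the tilt of $\tilde e_A$ enters only through $D_{e_3}e_3$, which is subleading. Hence $r^{-1}\tilde{\hat\chib}$ converges to $\Xi$ evaluated at the shifted base point $(\tilde u+f/2,x)$. For \eqref{str_Sigma}, I would use the structure equations $D_{e_A}e_4=\chi_A{}^B e_B-\zeta_A e_4$ and $D_{e_A}e_B=\slashed\Gamma^C_{AB}e_C+\tfrac12\chi_{AB}e_3+\tfrac12\chib_{AB}e_4$. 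The key new contribution from the tangential tilt of $\tilde e_4$ is
\[
g\bigl(D_{e_A}[\Omega\slashed g^{CD}\partial_D f\,e_C],\,e_B\bigr)=\slashed\na_A(\Omega\,\partial_B f)\longrightarrow \na_A\na_B f\quad\text{as }r\to\infty,
\]
since $\slashed g\to r^2\sigma$ and the $\slashed g$-Christoffels reduce to the $\sigma$-Christoffels at leading order. Subtracting the trace at leading order (with $\slashed g^{AB}\na_A\na_B f\sim r^{-2}\Delta f$ and $\tilde{\slashed g}_{AB}=\slashed g_{AB}+\ldots$) yields $\tilde{\hat\chi}_{AB}=\hat\chi_{AB}+\na_A\na_B f-\tfrac12\Delta f\,\sigma_{AB}$, hence \eqref{str_Sigma}. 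As a sanity check this is compatible, via $\Sigma=-\tfrac12 C$ and the factor-of-two relation $u_{\mathrm{BS}}=2u_{\mathrm{DN}}$, with the classical Bondi--Sachs shear transformation $\bar C_{AB}=C_{AB}-2\na_A\na_B f+\Delta f\,\sigma_{AB}$.

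For \eqref{str_P}, expand $4\tilde\rho=R(\tilde e_3,\tilde e_4,\tilde e_3,\tilde e_4)=R(e_3,\tilde e_4,e_3,\tilde e_4)$. The $e_3$-component of $\tilde e_4$ drops by antisymmetry of the Riemann tensor. Using $R(e_3,e_4,e_3,e_A)=-2\underline\beta_A$ and $R(e_3,e_A,e_3,e_C)=\underline\alpha_{AC}$ produces
\[
4\tilde\rho=4\rho-4\Omega\slashed g^{AB}\partial_B f\,\underline\beta_A+\Omega^2\slashed g^{AB}\slashed g^{CD}\partial_B f\,\partial_D f\,\underline\alpha_{AC}.
\]
Multiplying by $r^3$ and passing to the limit using $r^3\rho\to P$, $r\,\underline\beta\to\underline B$, $r^{-1}\underline\alpha\to\underline A$, and $r^{-2}\slashed g^{AB}\to\sigma^{AB}$ yields \eqref{str_P}. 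The main obstacle is \eqref{str_Sigma}: pinning down the exact trace-free Hessian combination $\na_A\na_B f-\tfrac12\Delta f\,\sigma_{AB}$ requires careful bookkeeping of the reduction $\slashed\Gamma\to\Gamma$ and of the trace subtraction under the induced metric; the $P$ computation, by contrast, reduces to an algebraic expansion of the Riemann tensor in the shifted frame.
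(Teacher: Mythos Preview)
Your proof is correct and follows essentially the same approach as the paper: transform the null frame under the coordinate change and read off the leading-order Ricci and curvature coefficients. The only cosmetic differences are that the paper obtains $\tilde e_4$ directly from the coordinate expression $\tilde\Omega^{-1}(\partial_{\tilde v}+\tilde b^A\partial_{\tilde x^A})$ (with $\tilde g_{AB}\tilde b^B=\Omega^2\partial_A f+\slashed g_{AB}b^B$) rather than by your algebraic nullness/orthogonality construction, and computes $\tilde\chi$ via $\langle D_{\tilde e_A}\tilde e_B,\tilde\Omega\tilde e_4\rangle$ instead of $g(D_{\tilde e_A}\tilde e_4,\tilde e_B)$; the two $\tilde e_4$'s differ only in their $e_3$-coefficient, which is $O(r^{-2})$ and irrelevant to all three limits, and both routes produce the Hessian term $\na_A\na_B f-\tfrac12\Delta f\,\sigma_{AB}$ in the same way.
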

\begin{proof}
We compute
\begin{align*}
du &= d\tilde u + \frac{\pl_A f}{2} d\tilde x^A\\
dv &= d\tilde v\\
dx^A &= d\tilde x^A.\\
\end{align*}
Thus the coordinate vector fields transform as
\begin{align*}
\frac{\pl}{\pl \tilde u} &= \frac{\pl}{\pl u}\\
\frac{\pl}{\pl \tilde v} &= \frac{\pl}{\pl v}\\
\frac{\pl}{\pl \tilde x^A} &= \frac{\pl_A f}{2}\frac{\pl}{\pl u} + \frac{\pl}{\pl x^A}
\end{align*}
and the frame \eqref{frame} transforms as
\begin{align*}
\tilde e_A &= \frac{\pl_A f}{2}\Omega e_3 + e_A\\
\tilde e_3 &= \tilde\Omega^{-1}\Omega e_3\\
\tilde e_4 &= \tilde\Omega^{-1} \lt( \Omega e_4 - b^A e_A + \tilde b^A \tilde e_A \rt)\\
&= \tilde\Omega^{-1} \lt( \Omega e_4 + \lt( \tilde b^A - b^A \rt) e_A + \tilde b^A \frac{\pl_A f}{2}\Omega e_3 \rt).
\end{align*}

Recall the identification $\tilde v, \tilde x^A$ with $v, x^A$. The metric coefficients are related by
\begin{align}\label{transformation_metric}
\begin{split}
\tilde\Omega &= \Omega\\
\tilde g_{AB} &= g_{AB}\\
\tilde g_{AB} \tilde b^B &= \Omega^2 \pl_A f + g_{AB}b^B
\end{split}
\end{align}
where the left-hand side are evaluated at $(\tilde u,v,x)$ and the right-hand side are evaluated at $\lt( \tilde u + \frac{f}{2}, v, x \rt)$.
 
We compute the transformation of $\chib$
\begin{align*}
\lt\langle D_{\tilde e_A} \tilde e_3, \tilde e_B \rt\rangle = \tilde\Omega^{-1}\Omega \lt\langle D_{e_A + \frac{1}{2}\pl_A f \Omega e_3} e_3, e_B \rt\rangle = \tilde\Omega^{-1}\Omega \lt\langle D_{e_A} e_3, e_B \rt\rangle
\end{align*}
and take limit to get
\begin{align*}
\tilde \Xi_{AB} = \Xi_{AB}.
\end{align*}

We compute the transformation of $\chi$
\begin{align*}
\lt\langle D_{\tilde e_A} \tilde e_B, \tilde\Omega \tilde e_4 \rt\rangle &= \pl_A \lt( \frac{\pl_B f}{2}\Omega \rt)\langle e_3, \Omega e_4 \rangle \\
&\quad+ \frac{\pl_B f}{2} \Omega \lt\langle D_{e_A + \frac{\pl_A f}{2} \Omega e_3} e_3, \Omega e_4 + \lt( \tilde b^C - b^C \rt)e_C \rt\rangle \\
&\quad + \lt\langle D_{e_A + \frac{\pl_A f}{2} \Omega  e_3} e_B, \Omega e_4 + \lt( \tilde b^C - b^C \rt)e_C + \tilde b^A \frac{\pl_A f}{2} \Omega e_3 \rt\rangle\\
&= - \pl_A \pl_B f + \lt\langle D_{e_A} e_B, \Omega e_4 \rt\rangle + \langle D_{e_A} e_B, e_C \rangle \lt( \tilde b^C - b^C \rt) + O(r^{-1}).
\end{align*}
The limit of the traceless part is thus
\begin{align*}
\tilde\Sigma_{AB} = \Sigma_{AB} + \na_A\na_B f - \frac{1}{2} \Delta f \,\sigma_{AB}.
\end{align*}

Finally, we compute the transformation of $\rho$
\begin{align*}
\tilde\rho &= \frac{1}{4}R \lt( \tilde e_3,\tilde e_4,\tilde e_3, \tilde e_4 \rt) \\ 
&= \frac{1}{4} R(e_3,e_4,e_3,e_4) + \frac{1}{2} ( \tilde b^A\ - b^A ) R(e_3,e_4,e_3,e_A) \\
&\quad + \frac{1}{4} ( \tilde b^A - b^A )( \tilde b^B - b^B ) R(e_3,e_A,e_3,e_B) + O(r^{-4})\\
&= \rho - \frac{1}{r^2} \na^A f \underline\beta_A + \frac{1}{4r^4} \na^A f \na^B f \underline\alpha_{AB} + O(r^{-4})
\end{align*}
and take limit. 
\end{proof}

The supertranslation can be used to kill the closed part\footnote{It is called the electric part in physics literature.} of the shear tensor $\Sigma$.  
\begin{theorem}
Suppose
\begin{align}
\sup_{(-\infty,\infty) \times S^2} \Xi_{AB}\Xi^{AB} < 48. \label{assumption_48}
\end{align}
Then for a fixed $\tilde u = \tilde u_0$ and any $\psi_{sl} \in \mathcal{H}_{\ell\le 1},$ there exists a unique $\Psi \in \mathcal{H}_{\ell\ge 2}$ such that $\tilde\Sigma_{AB}(\tilde u) = \Sigma_{AB}(\tilde u + \frac{\psi(x)}{2},x) + \na_A\na_B \psi - \frac{1}{2}\Delta\psi \sigma_{AB}$ with $\psi = \psi_{sl} + \Psi$ has no closed part.
\end{theorem}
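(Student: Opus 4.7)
The plan is to recast the equation as a fixed-point problem on $\mathcal{H}_{\ell \ge 2}$ and invoke Banach's contraction mapping theorem. Let $\mathcal{E}$ denote the ``electric potential'' map sending a symmetric trace-free $(0,2)$-tensor on $S^2$ to the unique $F \in \mathcal{H}_{\ell \ge 2}$ whose associated tensor $\na_A\na_B F - \tfrac{1}{2}\Delta F\,\sigma_{AB}$ equals the closed part of the given tensor; this is well defined because the map $F \mapsto \na_A\na_B F - \tfrac{1}{2}\Delta F\,\sigma_{AB}$ has kernel exactly $\mathcal{H}_{\ell \le 1}$ and is $L^2$-orthogonal to the magnetic part in the Hodge decomposition. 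Using the transformation formula \eqref{str_Sigma}, the condition that $\tilde\Sigma_{AB}(\tilde u_0, \cdot)$ has no closed part becomes the fixed-point equation
\begin{equation*}
\Psi \;=\; -\mathcal{E}\Big[\Sigma_{AB}\big(\tilde u_0 + \tfrac{1}{2}(\psi_{sl} + \Psi)(x),\, x\big)\Big] \;=:\; T(\Psi), \qquad \Psi \in \mathcal{H}_{\ell \ge 2}.
\end{equation*}

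Next I would show that $T$ is a strict contraction on $\mathcal{H}_{\ell \ge 2}$ equipped with the $L^2(S^2)$ norm. The sharp Poincar\'e-type inequality
\begin{equation*}
\int_{S^2}\big|\na_A\na_B F - \tfrac{1}{2}\Delta F\,\sigma_{AB}\big|^2 \;\ge\; 12\int_{S^2} F^2 \quad\text{for}\quad F \in \mathcal{H}_{\ell \ge 2},
\end{equation*}
with constant $12 = \tfrac{\ell(\ell+1)(\ell-1)(\ell+2)}{2}\big|_{\ell=2}$ coming from the smallest eigenvalue of the associated elliptic operator, implies that $\mathcal{E}$ has operator norm at most $1/\sqrt{12}$ from $L^2$-tensors to $L^2(S^2)$. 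The mean value theorem in the $u$-slot gives
\begin{equation*}
T(\Psi_1) - T(\Psi_2) = -\tfrac{1}{2}\mathcal{E}\Big[\int_0^1 \pl_u\Sigma_{AB}\big(\tilde u_0 + \tfrac{1}{2}(\psi_{sl} + s\Psi_1 + (1-s)\Psi_2), x\big) ds \cdot (\Psi_1-\Psi_2)\Big],
\end{equation*}
and the key step is that $\pl_u\Sigma_{AB} = -\Xi_{AB}$ in the double null gauge (as follows from \eqref{shear, news} combined with the fact that the double null optical function equals $(t-r)/2$ in Minkowski spacetime). Consequently
\begin{equation*}
\|T(\Psi_1) - T(\Psi_2)\|_{L^2} \;\le\; \tfrac{1}{2\sqrt{12}}\,\sup|\Xi|\,\|\Psi_1 - \Psi_2\|_{L^2} \;=\; \tfrac{\sup|\Xi|}{\sqrt{48}}\,\|\Psi_1 - \Psi_2\|_{L^2},
\end{equation*}
which is a strict contraction exactly when $\sup|\Xi|^2 < 48$. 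Banach's theorem then yields a unique fixed point $\Psi \in \mathcal{H}_{\ell \ge 2}$, and smoothness of $\Sigma$ and $\Xi$ bootstraps higher regularity.

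The main obstacle will be handling the nonlocality of $\mathcal{E}$: because extracting the closed part of a tensor on $S^2$ requires an elliptic inversion, a pointwise bound on $|\Xi|^2$ cannot be transferred pointwise to the potential, so the contraction estimate must be carried out in an $L^2$-based norm that exploits the sharp Poincar\'e constant. The borderline $48 = 4 \cdot 12$ arises precisely from the $\ell = 2$ eigenvalue of the elliptic operator $F \mapsto \na_A\na_B F - \tfrac{1}{2}\Delta F\sigma_{AB}$ together with the factor $\tfrac{1}{2}$ in the supertranslation formula, so the hypothesis is essentially sharp for this contraction argument.
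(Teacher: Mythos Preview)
Your proposal is correct and is essentially the same argument as the paper's: the paper writes the condition as the fourth-order equation $\Delta(\Delta+2)\Psi = -2\na^A\na^B\Sigma_{AB}\big(\tilde u + \tfrac{\Psi+\psi_{sl}}{2},x\big)$ and runs a Picard iteration by hand, obtaining the same $L^2$ contraction estimate $\int h_{n+1}^2 \le \tfrac{1}{48}\sup|\Xi|^2\int h_n^2$ from the identical ingredients (the $\ell=2$ eigenvalue bound, the identity $\int|\na^2 f - \tfrac12\Delta f\,\sigma|^2 = \tfrac12\int\Delta f(\Delta+2)f$, and the mean value theorem in $u$). Your formulation via the electric-potential operator $\mathcal{E}$ and Banach's theorem is just a repackaging of the same iteration; the constants match because $\Delta(\Delta+2)\mathcal{E}[\eta] = 2\na^A\na^B\eta_{AB}$, so the paper's bound $\|g\|_{L^2}\le\tfrac{1}{\sqrt{48}}\|\eta\|_{L^2}$ is exactly your $\|\mathcal{E}\|\le\tfrac{1}{\sqrt{12}}$.
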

Since the vector space $\mathcal{H}_{\ell\le 1}$ is 4-dimensional, the result singles out a cut, with a 4-dimensional degree of freedom, of null infinity. In Section 4.2 of \cite{Rizzi_thesis}, Rizzi considers the lapse transformation $L \rw a^{-1}L, \Lb \rw a \Lb$ with $\lim_{r\rw\infty} a = \psi$ along each null hypersurface. This leads to a linear equation
\[ \Delta(\Delta + 2)\psi = \na^A\na^B (\psi \Xi_{AB}), \]
which is the linearized equation of \eqref{no closed equation}, whose solution is used to describe a procedure that retains the 4-dimensional dependence of origins in his definition. See Chapter 4 and 5 of \cite{Rizzi_thesis}. The original proof of Rizzi's version \cite[Theorem 1]{Rizzi_thesis}, due to Christodoulou, requires the upper bound 16. We follow his argument with improved estimates.
\begin{proof}
The equation to be solved is
\begin{align}\label{no closed equation}
\Delta(\Delta + 2) \Psi = -2\na^A\na^B \lt( \Sigma_{AB}(\tilde u + \frac{\Psi + \psi_{sl}}{2},x)  \rt).
\end{align}
Set up an iterative equation
\[ \Delta(\Delta + 2) \Psi_{n+1} = -2\na^A\na^B \lt( \Sigma_{AB}( \tilde u + \frac{\Psi_n + \psi_{sl}}{2},x ) \rt).\]
and start with $\Psi_0 =0$. We will show that $\Psi_n$ converge to $\Psi$ in $L^2$ and this proves the existence.

Let $h_{n+1} = \Psi_{n+1} - \Psi_n$ and we have
\begin{align*}
\Delta(\Delta + 2)h_{n+1} = -2\na^A\na^B \lt( \Sigma_{AB}(\tilde u + \frac{\Psi_n + \psi_{sl}}{2},x) - \Sigma_{AB}(\tilde u + \frac{\Psi_{n-1} + \psi_{sl}}{2},x) \rt).
\end{align*} 

Let us analyze the equation
\begin{align}\label{temp_g}
\Delta(\Delta + 2)g = \na^A\na^B \eta_{AB}
\end{align}
for $g$ supported in $\ell\ge 2$ modes and a symmetric traceless 2-tensor $\eta_{AB}$. Multiplying by $g$ and integrating, we get
\[ \int_{S^2} \Delta g (\Delta+2)g = \int_{S^2} \eta_{AB} \lt( \na^A\na^B g - \frac{1}{2}\Delta g \sigma^{AB} \rt).\]
Decompose $g$ into spherical harmonics $g = \sum_{\ell =2}^\infty g_l$. Because of the orthogonality of $g_l$'s, we have
\begin{align}
\int_{S^2} \Delta g (\Delta+2)g = \sum_{\ell=2}^\infty
 [-\ell(\ell+1)][2-\ell(\ell+1)] \int_{S^2} g_\ell^2 \ge 24\int_{S^2} g^2 \label{1}.
\end{align}
On the other hand, for any smooth function $f$ on $S^2$, integration by parts yields the identity 
\[ \int_{S^2} \lt( \na_A\na_B f - \frac{1}{2}\Delta f \sigma_{AB}\rt)\lt( \na^A\na^B f - \frac{1}{2}\Delta f\sigma^{AB}\rt) = \frac{1}{2} \int_{S^2} \Delta f(\Delta+2) f. \]
By Cauchy-Schwarz and H\"{o}lder inequality, we get
\begin{align*}
\int_{S^2} \eta_{AB} \lt( \na^A\na^B g - \frac{1}{2}\Delta g \sigma^{AB} \rt) \le \sqrt{ \int_{S^2} |\eta|^2  \cdot \int_{S^2} |\na^2 g - \frac{1}{2}(\Delta g)\sigma|^2 }. 
\end{align*}
Putting these together, we obtain
\begin{align*}
24 \int_{S^2} g^2 \le \frac{1}{2} \int_{S^2} |\eta|^2.
\end{align*}

By the Mean Value Theorem, for each $x\in S^2$ \[  \lt| \Sigma_{AB}(\tilde u + \frac{\Psi_n + \psi_{sl}}{2}(x),x) - \Sigma_{AB}(\tilde u + \frac{\Psi_{n-1} + \psi_{sl}}{2}(x),x) \rt| \le  \sup_{(-\infty,\infty) \times \{x\} } |\Xi|  \cdot \frac{1}{2} |\Psi_{n} - \Psi_{n-1}|(x), \]
and integrating over $S^2$ we get
\[ \int_{S^2} h_{n+1}^2 \le \frac{1}{48} \sup_{(-\infty,\infty) \times S^2} |\Xi|^2 \int_{S^2} h_n^2. \]
If $\sup |\Xi|^2 < 48$, then $h_n$ converge to $0$ and hence $\Psi_n$ converge to $\Psi$ in $L^2$.

The uniqueness follows similarly. Suppose $\phi_1,\phi_2$ solve the equation with the same $\psi_{sl}$. Then $\phi = \phi_2 - \phi_1$ solves $\Delta(\Delta + 2)\phi = -2\na^A\na^B (\Sigma_{AB}(
\tilde u + \frac{\psi_{sl}+\phi_2}{2},x) - \Sigma_{AB}(\tilde u + \frac{\psi_{sl} + \phi_1}{2},x))$. The above analysis implies that
\[ \int_{S^2} \phi^2 \le \frac{1}{48} \sup |\Xi|^2 \int_{S^2} \phi^2 \]
and hence $\phi=0$ by \eqref{assumption_48}.
\end{proof}

\subsection{Total flux of angular momentum and the supertranslation ambiguity}
The time derivative of Rizzi's angular momentum is given by \cite[(4)]{Rizzi_PRL}
\begin{align*}
\frac{\pl L(Y_{(k)})}{\pl u} = \frac{1}{4\pi} \int_{S^2} \lt\{ -\Xi_{AB} \na_C \Sigma^{CB} + \frac{1}{2}\lt( \Sigma_{AB}\na_C \Xi^{CB} - \Sigma^C_B\na^B \Xi_{CA} \rt) \rt\} Y^A_{(k)}.
\end{align*}
\begin{remark}\label{difference in partial u}
Note that the factor is $\frac{1}{4\pi}$ rather than $\frac{1}{8\pi}$. The reason is that the parameter $u$ used in \cite{Christo1991} or \cite{Rizzi_PRL} (and used in Bondi-Sachs formalism) is equal to $t-r$ plus some constant in Minkowski spacetime. When citing their formula, the derivative $2 \frac{\pl}{\pl u}$ should be replaced by $\frac{\pl}{\pl u}$. For one more example, equation (5) of \cite{Christo1991} now reads 
\begin{align}\label{dSigma=Xi}
\frac{\pl\Sigma}{\pl u} = -\Xi.
\end{align} 
\end{remark}

For the rest of the paper, we assume 
\begin{align}\label{assumption_Xi}
\Xi_{AB} = O(|u|^{-1-\varepsilon})
\end{align}
for some $\varepsilon > 0$ as $u$ approaches $\pm\infty$.

We fix the rotation Killing vector $Y^A = \epsilon^{AB}\na_B \tilde X^k$ for some $k \in \{ 1,2,3\}$ for definiteness. Integrating by parts the last term and then integrating from $u=-\infty$ to $u = + \infty$, we obtain the {\it total flux} of Rizzi's angular momentum
\begin{align*}
\delta L = \frac{1}{8\pi} \int_{-\infty}^{+\infty} \lt( \int_{S^2}  Y^A \lt( \Sigma_{AB}\na_C \Xi^{CB} -\Xi_{AB} \na_C \Sigma^{CB}  \rt) + \epsilon^{AB}\tilde X^k \lt( \Sigma_A^C \Xi_{CB} \rt) \rt) du.
\end{align*}
The integral is finite because of \eqref{assumption_Xi}. In view of \eqref{shear, news}, the result, up to a minus sign\footnote{Firstly, the factor $\frac{1}{8\pi}$ is omitted in \cite{CKWWY_evol}. Secondly, take into account the discrepancy in $u$. Lastly, for Kerr spacetime Rizzi's definition gives $ma$ while Chen-Wang-Yau gives $-ma$.}, coincides with the classical Ashtekhar-Streubel flux of angular momentum \cite{AS}, see \cite[Theorem 1.2]{CKWWY_evol}.

\begin{definition}
Define 
\begin{align}
2m(u,x) = \lim_{r \rw\infty} r^3 \lt( K + \frac{1}{4} \tr\chi \tr\chib \rt),
\end{align}
the energy $E(u) = \frac{1}{4\pi} \int_{S^2} m$, and the linear momentum $P^k(u) = \frac{1}{4\pi} \int_{S^2 } \tilde X^k m$. Here $K$ is the Gauss curvature of $S_{u,v}$.
\end{definition}

The total flux of angular momentum changes when computing in different double null gauges. 
\begin{theorem}
Consider a supertranslation in a double null gauge $u = \tilde u + \frac{f(x)}{2}$ and let $\delta L_f$ denote the total flux of angular momentum computed in $(\tilde u, v,x)$ gauge. Suppose \eqref{assumption_Xi} holds, then 
\begin{align}
\delta L_f - \delta L = \frac{1}{4\pi} \int_{S^2} f Y^A \na_A \lt( m^+ - m^- \rt) 
\end{align}
where
\[ m^{\pm} = \lim_{u\rw \pm\infty} m(u,x). \]
\end{theorem}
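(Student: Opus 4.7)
The plan is to substitute the transformation laws \eqref{str_Sigma} and \eqref{str_Xi} directly into the integrand of $\delta L_f$, change the integration variable to $u = \tilde u + f(x)/2$ (which has Jacobian one for each fixed $x$ and preserves the limits $\pm\infty$), and subtract $\delta L$. Writing $F_{AB}(x) := \na_A\na_B f - \tfrac12 \Delta f\, \sigma_{AB}$, so that $\tilde\Sigma_{AB}(\tilde u,x) = \Sigma_{AB}(u,x) + F_{AB}(x)$ and $\tilde\Xi_{AB}(\tilde u,x) = \Xi_{AB}(u,x)$ after this substitution, one obtains an explicit integral over $(u,x)$ of a polynomial in $\Sigma$, $\Xi$, $F$, and $\na f$. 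The key subtlety is that the angular derivatives $\na_C$ in the integrand of $\delta L_f$ are taken at fixed $\tilde u$, so the chain rule produces extra contributions $\tfrac12(\pl_u\Sigma)\na_C f$ and $\tfrac12(\pl_u\Xi)\na_C f$ on top of the bare $\na_C\Sigma$ and $\na_C\Xi$ terms.

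Next, the $\Sigma$-dependence in $\delta L_f - \delta L$ appears only through factors of the form $\Sigma_{AB}\,\pl_u\Xi^{CB}$. Integrating by parts in $u$ and using \eqref{dSigma=Xi}, together with the decay \eqref{assumption_Xi} (which kills boundary terms at $u = \pm\infty$), converts these into quadratic $\Xi$-terms. One is left with an $(u,x)$ integral of a purely algebraic integrand in $F$, $\Xi$, and $\na f$, schematically
\[
Y^A F_{AB}\na_C\Xi^{CB} - Y^A\Xi_{AB}\na_C F^{CB} + \epsilon^{AB}\tilde X^k F_A{}^C \Xi_{CB} + Y^A\Xi_{AB}\Xi^{CB}\na_C f.
\]
The main reduction is a spherical integration-by-parts identity showing that this integrand equals $2 f Y^A \na_A\!\left[-\tfrac14 |\Xi|^2 + \tfrac12 \na^B\na^C\Xi_{BC}\right]$ modulo surface divergences. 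The ingredients are: the identity $\na_C F^{CB} = \tfrac12 \na^B(\Delta + 2)f$ (from commuting $\Delta$ with $\na$ on $S^2$, where Ricci equals $\sigma$); the pairing $\int_{S^2} f\, \na^A\na^B\Xi_{AB} = \int_{S^2} F^{AB}\Xi_{AB}$ (by trace-freeness of $\Xi$); the Killing relation $\na_A Y_B = \tilde X^k \epsilon_{AB}$, which follows from $\na_A\na_B\tilde X^k + \tilde X^k \sigma_{AB} = 0$ and lets the $\epsilon^{AB}\tilde X^k$ cross-term recombine with the $Y^A$ terms; and $\na_A Y^A = 0$.

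Finally, the evolution equation for the mass aspect in double null gauge,
\[
\pl_u m = -\tfrac14 |\Xi|^2 + \tfrac12 \na^A\na^B\Xi_{AB},
\]
obtained by transplanting \eqref{evolution of mass aspect} to this gauge (with the factor of two explained in Remark \ref{difference in partial u}), yields $m^+ - m^- = \int_{-\infty}^\infty \pl_u m\, du$ by \eqref{assumption_Xi}. Substituting into the right-hand side reproduces the spatial integrand of the previous step, and combining with the prefactor $\tfrac{1}{8\pi}$ inherited from $\delta L_f - \delta L$ yields the claimed $\tfrac{1}{4\pi}$. The main obstacle is the combinatorial bookkeeping in the spatial reduction; in particular, the $\epsilon^{AB}\tilde X^k$ cross-term and the $Y^A$ terms must be reconciled via several spherical integrations by parts, and the $\ell = 1$ equation $\na_A\na_B \tilde X^k + \tilde X^k \sigma_{AB} = 0$ is essential for these cancellations to close.
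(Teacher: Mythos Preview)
Your proposal is correct and follows essentially the same route as the paper's proof: substitute the transformation laws, apply the chain rule (producing the extra $\pl_u\Sigma\cdot\na f$ and $\pl_u\Xi\cdot\na f$ terms), integrate by parts in $u$ using \eqref{dSigma=Xi} to eliminate $\Sigma$, then reduce on $S^2$ to $fY^A\na_A\bigl(-\tfrac14|\Xi|^2+\tfrac12\na^B\na^C\Xi_{BC}\bigr)$ and invoke the evolution equation \eqref{evolution of mass aspect'} for $m$. The paper compresses the spatial reduction into the phrase ``a series of change of variables and integration by parts'' and defers the details to \cite[Section~5.1]{CKWWY_evol}; your sketch supplies precisely those details (the identities $\na_CF^{CB}=\tfrac12\na^B(\Delta+2)f$, $\na_AY_B=\tilde X^k\epsilon_{AB}$, $\Xi_{AB}\Xi^{CB}=\tfrac12|\Xi|^2\delta_A^C$, and $\na_AY^A=0$), so the two arguments coincide.
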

\begin{proof}
We sketch the proof. For details, see Section 5.1 of \cite{CKWWY_evol}. First rewrite $\delta L$ as
\begin{align*}
\frac{1}{8\pi} \int_{-\infty}^{+\infty} \lt( \int_{S^2} -\na_C Y^A \Sigma_{AB} \Xi^{CB} - Y^A \na_C \Sigma_{AB} \Xi^{CB} - Y^A \Xi_{AB} \na_C \Sigma^{CB} + \epsilon^{AB}\tilde X^k \Sigma_A^C \Xi_{CB} \rt) du
\end{align*}
For $\delta L_f$, simply replace $\Sigma, \Xi, du$ by $\tilde\Sigma, \tilde\Xi, d\tilde u$. Applying the chain rule to \eqref{str_Sigma} yields
\begin{align}\label{chain rule Sigma}
\begin{split}
\na_C\tilde\Sigma_{AB}(\tilde u, x) &= -\Xi_{AB}( \tilde u + \frac{f}{2}, x ) \frac{\na_C f}{2} + ( \na_C \Sigma_{AB} )( \tilde u + \frac{f}{2},x ) + \na_C ( \na_A\na_B f - \frac{1}{2}\Delta f \sigma_{AB} ),\\
\na_C \tilde\Sigma^{CB}(\tilde u, x) &= - \Xi^{CB}( \tilde u + \frac{f}{2},x)\frac{\na_C f}{2} + \lt( \na_C \Sigma^{CB}\rt) ( \tilde u + \frac{f}{2},x ) + \frac{1}{2} \na^B (\Delta + 2)f,
\end{split} 
\end{align}
where we used \eqref{dSigma=Xi}.

After a series of change of variables and integration by parts, the difference of total flux is put into the following simple form
\begin{align*}
\delta L_f - \delta L = -\frac{1}{16\pi} \int_{-\infty}^{+\infty} \lt( \int_{S^2} f Y^A  \na_A \lt( |\Xi|^2 - 2 \na^B\na^C \Xi_{BC} \rt) \rt) du.
\end{align*}

On the other hand, equations (1), (3), (4), (7) of \cite{Christo1991} together imply
\begin{align*}
-\na^B\na^C \Xi_{BC} = -\frac{1}{2}|\Xi|^2 + \frac{\pl}{\pl u} \lt( P - \frac{1}{2} \Sigma_{AB}\Xi^{AB} \rt).
\end{align*}
Moreover, by (3) of \cite{Christo1991}, we have
\begin{align}
2m = - P + \frac{1}{2} \Sigma_{AB} \Xi^{AB} \label{m and P}
\end{align}
and thus
\begin{align}\label{evolution of mass aspect'}
\pl_u m = -\frac{1}{4}|\Xi|^2 + \frac{1}{2} \na^B\na^C\Xi_{BC}.
\end{align}
\end{proof}

When $f$ is supported in $\ell \le 1$ modes, the difference is equal to the total flux of linear momentums. However, if $f$ has $\ell \ge 2$ modes, the formula bears no resemblance to \eqref{SR} and physicists have yet to find an interpretation. This is the so-called ``supertranslation ambiguity". For more on supertranslation ambiguity, we refer the readers to Ashtekar-De Lorenzo-Khera \cite{ADK}. 

\subsection{Supertranslation invariance of the Chen-Wang-Yau angular momentum and center of mass integral}
The total flux of the Chen-Wang-Yau angular momentum $\delta J$ is given by \cite{CKWWY_evol}
\begin{align*}
\delta J = \delta L - \frac{1}{8\pi} \lt[ \int_{S^2} Y^A \mathfrak{s} \na_A m \rt]_{-\infty}^{+\infty}
\end{align*}
where the function $\mathfrak{s}(u,x)$ is the potential in the Hodge decomposition
\begin{align*}
\Sigma_{AB} = \na_A\na_B \mathfrak{s} - \frac{1}{2}\Delta \mathfrak{s} \,\sigma_{AB} + (\mbox{co-closed part}).
\end{align*}
We assume $\mathfrak{s}$ is supported in $\ell \ge 2$ modes and is thus unique. This nonlocal term $\mathfrak{s}$ appears when solving the optimal isometric embedding equation \cite{KWY}. The main result in \cite{CKWWY_evol, CWWY_atmp} is that the total flux of the Chen-Wang-Yau angular momentum transforms according to \eqref{SR} under supertranslations. 
\begin{theorem}
Consider a supertranslation in double null gauge $u = \tilde u + \frac{f(x)}{2}$ and let $\delta J_f$ denote the total flux of the Chen-Wang-Yau angular momentum computed in $(\tilde u, v,x)$ gauge. Suppose \eqref{assumption_Xi} holds, then
\begin{align}
\delta J_f - \delta J = -\varepsilon^{ik}_{\;\;\;\;l}\alpha_i \delta P^l
\end{align}
where $f_{\ell\le 1} = \alpha_0 + \alpha_i \tilde X^i,$ $\varepsilon$ is the Levi-Civita symbol, and $\delta P^l = \frac{1}{4\pi} \int_{S^2} \tilde X^l(m^+ - m^-)$ is the total flux of linear momentum.
\end{theorem}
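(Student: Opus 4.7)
The plan is to reduce $\delta J_f - \delta J$ to the change $\delta L_f - \delta L$ (handled by the previous theorem) plus the change in the nonlocal $\mathfrak{s}$-term at $u = \pm \infty$. From the definition of $\delta J$ stated above,
\[
\delta J_f - \delta J = (\delta L_f - \delta L) - \frac{1}{8\pi}\left\{\lt[\int_{S^2} Y^A\tilde{\mathfrak{s}}\,\na_A\tilde m\rt]_{-\infty}^{+\infty} - \lt[\int_{S^2} Y^A\mathfrak{s}\,\na_A m\rt]_{-\infty}^{+\infty}\right\},
\]
so it remains to determine the boundary values $\tilde{\mathfrak{s}}^\pm$ and $\tilde m^\pm$ of the Hodge potential of $\Sigma$ and the mass aspect in the new gauge.

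First I would take $\tilde u \to \pm\infty$ in \eqref{str_Sigma}. Since $f$ is bounded, this gives $\tilde\Sigma^\pm_{AB} = \Sigma^\pm_{AB} + \na_A\na_B f - \tfrac12\Delta f\,\sigma_{AB}$; because the closed-part operator $\na_A\na_B - \tfrac12\Delta\sigma_{AB}$ annihilates $\mathcal{H}_{\ell\le 1}$ and the potentials are normalised to $\mathcal{H}_{\ell\ge 2}$, uniqueness forces $\tilde{\mathfrak{s}}^\pm = \mathfrak{s}^\pm + f_{\ell\ge 2}$. For the mass aspect, combining \eqref{m and P} with the transformations \eqref{str_Sigma}, \eqref{str_Xi}, \eqref{str_P} produces
\[
2\tilde m(\tilde u, x) = 2m(\tilde u + \tfrac{f}{2}, x) + \na^A f\,\underline B_A - \tfrac14\na^A f\,\na^B f\,\underline A_{AB} + \tfrac12\na^A\na^B f\,\Xi_{AB},
\]
with the correction terms on the right evaluated at $(\tilde u + f/2, x)$. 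Under \eqref{assumption_Xi} together with the decay of the curvature limits $\underline A, \underline B$ as $u \to \pm\infty$ (from the asymptotic flatness assumed in \cite{CK}), all three corrections vanish in the limit and $\tilde m^\pm = m^\pm$.

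Substituting these boundary values together with the previous theorem yields
\[
\delta J_f - \delta J = \frac{1}{4\pi}\int_{S^2} f\,Y^A\na_A(m^+ - m^-) - \frac{1}{8\pi}\int_{S^2} Y^A f_{\ell\ge 2}\,\na_A(m^+ - m^-).
\]
I would then decompose $f = f_{\ell\le 1} + f_{\ell\ge 2}$ with $f_{\ell\le 1} = \alpha_0 + \alpha_i\tilde X^i$. The constant $\alpha_0$ drops out because $\na_A Y^A = 0$, while the $\alpha_i\tilde X^i$ piece is treated by one integration by parts together with the Lie-algebra identity $Y^A_{(k)}\na_A\tilde X^i = \varepsilon^{ikl}\tilde X^l$; recognising $\delta P^l = \frac{1}{4\pi}\int_{S^2} \tilde X^l(m^+ - m^-)$ then reproduces exactly $-\varepsilon^{ik}{}_l\alpha_i\,\delta P^l$.

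The main obstacle is showing that the residual $f_{\ell\ge 2}$ contribution cancels: a naive accounting of the two integrals above leaves a term proportional to $\int f_{\ell\ge 2}\,Y^A\na_A(m^+ - m^-)$ which does not vanish pointwise in $f_{\ell\ge 2}$. To handle it I would reopen the nonlocal correction: using the evolution equation \eqref{evolution of mass aspect'} for $m$ together with the observation that $\pl_u\mathfrak{s}$ is the Hodge potential of the closed part of $-\Xi$ (a consequence of \eqref{dSigma=Xi}), the $\mathfrak{s}$-term can be re-expressed as a $u$-integral over $(-\infty,+\infty)$; a further integration by parts in $u$ should then exhibit the apparent $f_{\ell\ge 2}$ residue as a total derivative that matches, with opposite sign, its counterpart from $\delta L_f - \delta L$. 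This matching is precisely where the optimal-isometric-embedding origin of the $\mathfrak{s}$-correction is decisive, and it is the central technical content of \cite{CKWWY_evol, CWWY_atmp}.
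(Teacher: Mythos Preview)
Your opening moves are exactly the paper's: write $\delta J_f-\delta J$ as $(\delta L_f-\delta L)$ minus the change in the boundary correction, then establish $\tilde{\mathfrak s}^{\pm}=\mathfrak s^{\pm}+f_{\ell\ge 2}$ from \eqref{str_Sigma} and $\tilde m^{\pm}=m^{\pm}$ from \eqref{str_P} and \eqref{m and P} (the paper invokes the limits $\underline B_A,\underline A_{AB}\to 0$ via equations (2), (6) of \cite{Christo1991} exactly as you do). Up to this point the two arguments are identical.

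Where you diverge is in your ``main obstacle''. In the paper there is no obstacle: the $f_{\ell\ge 2}$ contribution from $\delta L_f-\delta L$ is cancelled \emph{identically and pointwise} by the shift $\tilde{\mathfrak s}^{\pm}=\mathfrak s^{\pm}+f_{\ell\ge 2}$ in the boundary correction, because the two integrands have the same form $Y^A(\cdot)\na_A(m^+-m^-)$. That is the whole purpose of the $\mathfrak s$-term. The paper simply writes ``Putting these together'' and lands on $\frac{1}{8\pi}\int_{S^2} f_{\ell\le 1}\,Y^A\na_A(m^+-m^-)$; the final integration by parts and the identity $Y^A\na_A\tilde X^i=\varepsilon^{ik}{}_{l}\tilde X^l$ finish the proof just as you describe.

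The residual you found comes from a factor-of-two mismatch between the $\tfrac{1}{4\pi}$ in the preceding theorem and the $\tfrac{1}{8\pi}$ in the definition of $\delta J$ as stated; the paper's proof makes clear that these coefficients are intended to agree (so that the cancellation is exact), and the discrepancy is a normalization artifact of the double-null $u$ versus the Bondi $u$ rather than a mathematical phenomenon. Your proposed remedy---rewriting the boundary $\mathfrak s$-term as a bulk $u$-integral via \eqref{dSigma=Xi} and \eqref{evolution of mass aspect'} and then matching against the bulk of $\delta L$---is not what the paper does and would not work: the correction enters only at $u=\pm\infty$, there is no further bulk identity to generate the missing half, and the argument you sketch would at best reproduce the same boundary expression you started from. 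Drop that paragraph; once the coefficients are aligned the proof is the one-line cancellation the paper gives.
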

\begin{proof}
By \eqref{str_Sigma}, we have
\begin{align*}
\tilde{\mathfrak{s}}^{\pm}(x) = \mathfrak{s}^\pm(x) + f_{\ell\ge 2}.
\end{align*}
Moreover, by virtue of equations (2) and (6) of \cite{Christo1991}, \eqref{str_P} and the assumption on the decay of $\Xi$ \eqref{assumption_Xi} imply $\tilde P^{\pm} = P^{\pm}$. Consequently, 
\begin{align*}
\tilde m^{\pm}(x) = m^{\pm}(x)
\end{align*}
by \eqref{m and P}.

Putting these together, we obtain 
\[\delta J_f - \delta J = \frac{1}{8\pi} \int_{S^2} f_{\ell\le 1} Y^A \na_A (m^+ - m^-).\]
Recall that $Y^A = \epsilon^{AB}\na_B \tilde X^k$ and we have
\begin{align*}
Y^A \na_A f_{\ell\le 1} = \varepsilon^{ik}_{\;\;\;\;l} \alpha_i \tilde X^l.
\end{align*}
\end{proof}

Next, we turn to the effect of supertranslation on the center of mass integral. In double null gauge, the total flux of Ashtekar-Streubel center of mass integral $\delta\tilde C^k$ is given by 
\begin{align*}
\delta\tilde C^k =\frac{1}{8\pi} \int_{-\infty}^{+\infty} \lt[ \int_{S^2} \nabla^{A}\tilde{X}^{k} \lt(  -\frac{u}{2} \nabla_A |\Xi|^2+ \Sigma_{AB}\nabla_{D}\Xi^{BD} - \Xi_{AB}\nabla_{D}\Sigma^{BD} \rt)  \rt] du,
\end{align*}
up to a minus sign. Again, compare \cite[Theorem 1.2]{CKWWY_evol}.
\begin{theorem}
Consider a supertranslation in double null gauge $u = \tilde u + \frac{f(x)}{2}$ and let $\delta \tilde C_f$ denote the total flux of Ashtekar-Streubel center of mass integral computed in $(\tilde u, v,x)$ gauge. Suppose \eqref{assumption_Xi} holds, then \begin{align*}
\delta\tilde C^k_f - \delta\tilde C^k = \frac{1}{8\pi} \int_{S^2} \lt( -6\tilde X^k (m^+ - m^-) + 2 \na^A\tilde X^k \na_A(m^+ - m^-) \rt)f.
\end{align*}
\end{theorem}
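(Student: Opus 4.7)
The plan is to extend the argument for $\delta L_f - \delta L$ given in Section 5.1 of \cite{CKWWY_evol}; the main new feature is the explicit $\tilde u$ weight appearing in the definition of $\delta \tilde C^k$. I would first insert the supertranslation laws \eqref{str_Sigma} and \eqref{str_Xi} into $\delta \tilde C^k_f$, and then change variables $u = \tilde u + f(x)/2$ in the outer integral. Under this change $d\tilde u = du$, the weight $-\tilde u/2$ splits as $-u/2 + f/4$, the arguments of $\tilde\Xi, \tilde\Sigma$ are shifted to $u$, and the spatial covariant derivatives produce chain-rule contributions of the form in \eqref{chain rule Sigma}, which rely on $\pl_u \Sigma = -\Xi$ from \eqref{dSigma=Xi}. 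Derivatives of $\tilde\Xi$ generate $\pl_u \Xi$ terms instead, which I would remove by integrating by parts in $u$, with boundary terms vanishing thanks to \eqref{assumption_Xi}.

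Next, I would organise the expansion of $\delta \tilde C^k_f$ as $\delta \tilde C^k + I_f + II_f + O(f^2)$. Here $II_f$ collects the $O(f)$ contributions that do not come from the explicit weight, and
\[ I_f = \frac{1}{32\pi}\int_{-\infty}^{+\infty}\int_{S^2} f\,\na^A\tilde X^k\,\na_A|\Xi|^2\,du \]
is the extra piece produced by $-\tilde u/2 = -u/2 + f/4$. The $O(f^2)$ terms should cancel after integration by parts on $S^2$ and \eqref{assumption_Xi}, exactly as in the angular-momentum computation. The piece $II_f$ is formally identical to the $O(f)$ contribution in $\delta L_f - \delta L$ but with the rotation vector $Y^A = \epsilon^{AB}\na_B \tilde X^k$ replaced by the gradient $\na^A \tilde X^k$; following the same manipulations that yield $|\Xi|^2 - 2\na^B\na^C\Xi_{BC} = -4\pl_u m$ (cf.\ \eqref{evolution of mass aspect'}) in the angular-momentum case, I expect
\[ II_f = \frac{1}{4\pi}\int_{S^2} f\,\na^A\tilde X^k\,\na_A(m^+ - m^-). \]

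For $I_f$, I would integrate by parts on $S^2$ to move $\na_A$ off $|\Xi|^2$, obtaining
\[ I_f = -\frac{1}{32\pi}\int_{-\infty}^{+\infty}\int_{S^2} (\na_A f\,\na^A\tilde X^k + f\Delta\tilde X^k)|\Xi|^2\,du, \]
and then use $\Delta\tilde X^k = -2\tilde X^k$. Applying \eqref{evolution of mass aspect'} once more, the $u$-integral of $|\Xi|^2$ converts to $-4(m^+ - m^-) + 2\int \na^B\na^C\Xi_{BC}\,du$; the leftover $\na^B\na^C\Xi_{BC}$ terms are pushed by integration by parts on $S^2$ onto the purely angular weights $f\tilde X^k$ and $\na_A f\,\na^A\tilde X^k$, and are subsequently recombined with the analogous pieces in $II_f$. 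Collecting numerical factors, the $\tilde X^k$-weighted terms should sum to $-6\tilde X^k(m^+ - m^-)$ and the $\na^A\tilde X^k\na_A$-weighted terms to $2\na^A\tilde X^k\na_A(m^+ - m^-)$, producing the claimed identity.

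The main obstacle is the bookkeeping that pins down the precise coefficients $-6$ and $2$. Unlike the angular-momentum case, where $\na_A Y^A = 0$ eliminates any $\tilde X^k$-weighted contribution from integration by parts, here $\na_A(\na^A\tilde X^k) = \Delta \tilde X^k = -2\tilde X^k$, so both weights appear and must be tracked separately. The $-6$ emerges from the interplay between the $f/4$ piece of $-\tilde u/2$, the factor $-2$ from $\Delta\tilde X^k$, the $-1/4$ coefficient of $|\Xi|^2$ in \eqref{evolution of mass aspect'}, and cross contributions in $II_f$ that also deposit $\tilde X^k$-weighted terms after further integration by parts. A systematic expansion in the style of \cite{CKWWY_evol, CWWY_atmp} should deliver these coefficients cleanly.
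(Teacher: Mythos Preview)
Your approach is essentially the same as the paper's: insert the supertranslation laws \eqref{str_Sigma}, \eqref{str_Xi}, change variables $\tilde u\mapsto u$, apply the chain rule to the spatial derivatives, integrate by parts in $u$ to eliminate the $\pl_u\Xi$ terms (boundary terms vanish by \eqref{assumption_Xi}), and then integrate by parts on $S^2$ and invoke \eqref{evolution of mass aspect'}. The paper follows exactly this route.

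One organizational remark. Your analogy ``$II_f$ is formally identical to the $O(f)$ contribution in $\delta L_f-\delta L$ with $Y^A$ replaced by $\nabla^A\tilde X^k$'' is not quite right: the angular-momentum flux carries an extra $\epsilon^{AB}\tilde X^k\Sigma_A^{\;\,C}\Xi_{CB}$ term absent here, and $Y^A$ is divergence-free whereas $\nabla^A\tilde X^k$ is not, so the two integrations by parts behave differently and $II_f$ already produces a $\tilde X^k$-weighted piece. The paper avoids your $I_f/II_f$ split by first rewriting the weight term $-\frac{u}{2}\nabla^A\tilde X^k\nabla_A|\Xi|^2$ as $-u\,\tilde X^k|\Xi|^2$ via one integration by parts on $S^2$; after that rewriting there is no spatial derivative on $|\tilde\Xi|^2$ and the change of variables gives the clean contribution $\frac{f}{2}\tilde X^k|\Xi|^2$ directly. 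The remaining manipulations then land on
\[
\frac{1}{8\pi}\int_{-\infty}^{+\infty}\!\int_{S^2}\Big[f\tilde X^k\Big(\tfrac{|\Xi|^2}{2}-\nabla^A\nabla^B\Xi_{AB}\Big)+\nabla^A\tilde X^k\nabla_Af\Big(\tfrac{|\Xi|^2}{2}-\nabla^B\nabla^C\Xi_{BC}\Big)\Big]du,
\]
from which \eqref{evolution of mass aspect'} and one more integration by parts (moving $\nabla_A$ off $f$) yield the coefficients $-6$ and $2$. Your plan arrives at the same place once the bookkeeping is done; the paper's preliminary rewriting just makes it shorter.
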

\begin{proof}
First of all, note that the first term of $\delta\tilde C^k$ is equal to $\int_{S^2} -u \tilde X^k |\Xi|^2$. For $\delta \tilde C_f$, simply replace $\Sigma, \Xi, du$ by $\tilde\Sigma, \tilde\Xi, d\tilde u$. Applying the chain rule to \eqref{str_Xi}, we get 
\begin{align}
\na_D \tilde\Xi^{BD}(\tilde u, x) = \pl_u\Xi^{BD}\lt( \tilde u + \frac{f(x)}{2}, x \rt) \frac{\na_D f}{2} + \na_D \Xi^{BD}\lt( \tilde u + \frac{f(x)}{2}, x \rt).
\end{align}
Combining with \eqref{chain rule Sigma}, we obtain
\begin{align*}
&\delta\tilde C^k_f - \delta\tilde C^k \\
&= \frac{1}{8\pi}\int_{-\infty}^{+\infty} \lt[ \int_{S^2} \frac{f}{2}\tilde X^k |\Xi|^2 + \na^A\tilde X^k (\na_A\na_B f - \frac{1}{2}\Delta f \sigma_{AB})(\pl_u \Xi^{BD} \frac{\na_D f}{2} + \na_D \Xi^{BD})\rt]du \\
&\quad + \frac{1}{8\pi}\int_{-\infty}^{+\infty} \lt[ \int_{S^2} \na^A \tilde X^k \lt( \Sigma_{AB} \pl_u \Xi^{BD} \frac{\na_D f}{2} - \Xi_{AB} ( -\Xi^{CB}\frac{\na_C f}{2} + \frac{1}{2}\na^B(\Delta+2)f )\rt)\rt] du\\
&= \frac{1}{8\pi} \int_{-\infty}^{+\infty} \lt[ \int_{S^2} \frac{f}{2}\tilde X^k |\Xi|^2 + \na^A \tilde X^k (\na_A\na_B f - \frac{1}{2}\Delta f\sigma_{AB}) \na_D \Xi^{BD} \rt]du\\
&\quad + \frac{1}{8\pi}\int_{-\infty}^{+\infty} \lt[ \int_{S^2} \na^A \tilde X^k \lt( |\Xi^2| \frac{\na_A f}{2} - \frac{1}{2} \Xi_{AB}\na^B (\Delta + 2)f \rt) \rt] du
\end{align*}
where we integrate by parts in $u$ and use \eqref{dSigma=Xi} together with the identity $\Xi_{AB}\Xi^{BC} = \frac{1}{2}\delta_A^C |\Xi|^2$ in the second equality. Further integration by parts on $S^2$ (see the proof of Theorem 5.3 \cite{CKWWY_evol}) leads to
\begin{align*}
&\delta\tilde C^k_f - \delta\tilde C^k \\
&= \frac{1}{8\pi}\int_{-\infty}^{+\infty} \lt[ \int_{S^2} f\tilde X^k\lt( \frac{|\Xi|^2}{2} - \na^A\na^B\Xi_{AB} \rt) + \na^A \tilde X^k \lt( \frac{|\Xi|^2}{2} - \na^B\na^C\Xi_{BC} \rt) \na_A f \rt] du\\
&= \frac{1}{8\pi} \int_{S^2} \lt( -6 \tilde X^k (m^+ - m^-) + 2 \na^A\tilde X^k \na_A(m^+ - m^-) \rt)f 
\end{align*}
\end{proof}

On the other hand, the total flux of the Chen-Wang-Yau center of mass-integral transforms according to \eqref{SR} under supertranslations.
\begin{theorem}
Consider a supertranslation in a double null gauge $u = \tilde u + \frac{f(x)}{2}$ and let $\delta C_f$ denote the total flux of the Chen-Wang-Yau center of mass integral computed in $(\tilde u, v,x)$ gauge. Suppose \eqref{assumption_Xi} holds,\begin{align}
\delta C^k_f - \delta C^k = -\alpha_0 \delta P^k - \alpha_k \delta E
\end{align}
where $f_{\ell\le 1} = \alpha_0 + \alpha_i \tilde X^i,$ and $\delta E = \frac{1}{4\pi} \int_{S^2} m^+ - m^-$ is the total flux of energy.
\end{theorem}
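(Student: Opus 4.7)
The plan is to parallel the preceding proof for $\delta J^k$: the Chen-Wang-Yau total flux differs from the Ashtekar-Streubel total flux by a nonlocal boundary correction involving the shear potential $\mathfrak{s}$ and the mass aspect $m$, arising from the optimal isometric embedding equation \eqref{oiee}. Schematically,
\begin{equation*}
\delta C^k = \delta \tilde C^k + \Phi^k[\mathfrak{s}^\pm, m^\pm]
\end{equation*}
for a boundary functional $\Phi^k$ linear in $\mathfrak{s}^\pm$ (the analogue of $-\frac{1}{8\pi}[\int_{S^2} Y^A \mathfrak{s} \na_A m]_{-\infty}^{+\infty}$ that appeared in $\delta J$). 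The strategy is then to apply the transformation laws already established in the angular momentum proof, namely $\tilde{\mathfrak{s}}^\pm = \mathfrak{s}^\pm + f_{\ell\ge 2}$ (from \eqref{str_Sigma}) and $\tilde m^\pm = m^\pm$ (from \eqref{str_P}, \eqref{m and P}, and \eqref{assumption_Xi}), and to combine the result with the Ashtekar-Streubel difference from the previous theorem.

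Concretely, writing
\begin{equation*}
\delta C^k_f - \delta C^k = (\delta \tilde C^k_f - \delta \tilde C^k) + \lt(\Phi^k[\tilde{\mathfrak{s}}^\pm, m^\pm] - \Phi^k[\mathfrak{s}^\pm, m^\pm]\rt),
\end{equation*}
the second bracket is, by linearity of $\Phi^k$ in $\mathfrak{s}^\pm$ and the invariance $\tilde m^\pm = m^\pm$, a linear functional of $f_{\ell\ge 2}$ alone. The key structural claim is that this contribution cancels the $f_{\ell\ge 2}$ piece of the first bracket, exactly as in the angular momentum case, leaving only the $f_{\ell\le 1}$ piece of $\delta \tilde C^k_f - \delta \tilde C^k$.

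Granting the cancellation, substitute $f_{\ell\le 1} = \alpha_0 + \alpha_i \tilde X^i$ into
\begin{equation*}
\frac{1}{8\pi}\int_{S^2}\lt(-6\tilde X^k (m^+ - m^-) + 2\na^A \tilde X^k \na_A(m^+ - m^-)\rt) f_{\ell\le 1}
\end{equation*}
and integrate by parts on $S^2$. The $\alpha_0$ piece collapses, via $\Delta \tilde X^k = -2\tilde X^k$, to $-\frac{2\alpha_0}{8\pi}\int_{S^2}\tilde X^k(m^+-m^-) = -\alpha_0 \delta P^k$. For the $\alpha_i \tilde X^i$ piece, integration by parts in the gradient term together with the standard identity $\na_A \tilde X^i \na^A \tilde X^k = \delta^{ik} - \tilde X^i \tilde X^k$ produces $+6\alpha_i \tilde X^i \tilde X^k(m^+-m^-)$ inside the integrand, which cancels the $-6\alpha_i\tilde X^i \tilde X^k (m^+-m^-)$ from the first term, so only $-\frac{2\alpha_k}{8\pi}\int_{S^2}(m^+-m^-) = -\alpha_k \delta E$ survives. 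Summing yields the asserted identity.

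The main obstacle is verifying the $f_{\ell\ge 2}$ cancellation, which requires the explicit form of $\Phi^k$ derived from the optimal isometric embedding equation at null infinity. This is the content of the corresponding Bondi-Sachs computation in Section~5.2 of \cite{CWWY_atmp}; the translation to the double null gauge is carried out via the dictionary \eqref{shear, news}, together with the fact that both $\mathfrak{s}^\pm$ and $m^\pm$, being geometric limits of shear and Gauss curvature, match in the two gauges at the level of the relevant boundary functionals.
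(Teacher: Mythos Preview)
Your proposal is correct and follows essentially the same route as the paper. The paper's proof simply writes down the explicit correction functional
\[
\Phi^k[\mathfrak{s}^\pm,m^\pm]=\left.\frac{1}{8\pi}\int_{S^2}\bigl(-6\tilde X^k\,\mathfrak{s}\,m+2\,\nabla^A\tilde X^k\,\mathfrak{s}\,\nabla_A m\bigr)\right]_{-\infty}^{+\infty},
\]
so that the $f_{\ell\ge 2}$ cancellation you flag as the ``main obstacle'' is immediate by direct comparison with the preceding theorem's expression for $\delta\tilde C^k_f-\delta\tilde C^k$; your treatment leaves $\Phi^k$ abstract and defers this identification to the Bondi--Sachs computation in \cite{CWWY_atmp}, but the logic and the final $f_{\ell\le 1}$ evaluation (via $\Delta\tilde X^k=-2\tilde X^k$ and $\nabla_A\tilde X^i\nabla^A\tilde X^k=\delta^{ik}-\tilde X^i\tilde X^k$) are identical.
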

\begin{proof}
The Chen-Wang-Yau center of mass integral  has the correction term on the total flux
\begin{align*}
\delta C^k = \delta\tilde C^k + \lt. \frac{1}{8\pi} \int_{S^2} -6\tilde X^k \mathfrak{s} m  + 2 \na^A \tilde X^k \mathfrak{s} \na_A m \rt]_{-\infty}^{+\infty}.
\end{align*}
Recall that under the supertranslation, 
\[ \tilde m^{\pm}(x) = m^{\pm}(x), \quad \tilde{\mathfrak{s}}^{\pm}(x) = \mathfrak{s}^{\pm}(x) + f_{\ell\ge 2},\]
and we obtain
\begin{align*}
\delta C^k_f - \delta C^k &= \frac{1}{8\pi} \int_{S^2} \lt( -6\tilde X^k (m^+ - m^-) + 2 \na^A\tilde X^k \na_A(m^+ - m^-) \rt) f_{\ell\le 1}\\
&= \frac{1}{8\pi} \int_{S^2} -2 \alpha_0 \tilde X^k (m^+ - m^-) -2  \alpha_k (m^+ - m^-),
\end{align*}
where we used the identity $\na^A \tilde X^k \na_A \tilde X^i = \delta^{ik} - \tilde X^k\tilde X^i$.
\end{proof} 

We remark that for non-radiative spacetimes, namely $\Xi \equiv 0$, the Chen-Wang-Yau angular momentum and center of mass integral are constant in $u$ and transform according to \eqref{SR} under supertranslations. See \cite[Theorem 6.2]{CKWWY_evol} or \cite[Theorem 2 and Section 5]{CWWY_atmp}. Moreover, the total fluxes of the Chen-Wang-Yau angular momentum and center of mass (or themselves in non-radiative spacetimes) transform equivariantly under Lorentz transformations (defined in Bondi-Sachs formalism) \cite{CWWY_BMS}.


\begin{thebibliography}{Besse}
\bibitem{ADK} Ashtekar, Abhay; De Lorenzo, Tommaso; Khera, Neev, {\it Compact binary coalescences: the subtle issue of angular momentum}, Phys. Rev. D 101 (2020), no. 4, 044005, 18 pp.

\bibitem{AS} A. Ashtekar\ and\ M. Streubel, \textit{Symplectic geometry of radiative modes and conserved quantities at null infinity}, Proc. Roy. Soc. London Ser. A {\bf 376} (1981), no.~1767, 585--607.
Proc. R. Soc. A 376, 585--607 (1981).


\bibitem{Bondi} H. Bondi, \textit{Gravitational waves in general relativity}, Nature, 186:535, May 1960.

\bibitem{BVM} H. Bondi, M. G. J. van der Burg\ and\ A. W. K. Metzner, \textit{Gravitational waves in general relativity. VII. Waves from axi-symmetric isolated systems}, Proc. Roy. Soc. London Ser. A {\bf 269} (1962), 21--52

\bibitem{Christo1991} D. Christodoulou, \textit{Nonlinear nature of gravitation and gravitational-wave experiments}, Phys. Rev. Lett. {\bf 67} (1991), no.~12, 1486--1489.

\bibitem{ChristoMG9} D. Christodoulou, \textit{The global initial value problem in general relativity}, Proceedings of the 9th Marcel Grossmann Meeting on General Relativity, MG9, World Scientific (2002) 44--54.

\bibitem{CK} D. Christodoulou and S. Klainerman, \textit{The global nonlinear stability of the Minkowski space}, Princeton Mathematical Series, 41. Princeton University Press, Princeton, NJ, 1993.

\bibitem{C_formation} D. Christodoulou, {\it The formation of black holes in general relativity}, EMS Monographs in Mathematics. European Mathematical Society (EMS), Z\"urich, 2009.

\bibitem{C_shock} D. Christodoulou, {\it The formation of shocks in 3-dimensional fluids}, EMS Monographs in Mathematics. European Mathematical Society (EMS), Z\"urich, 2007. 

\bibitem{C_lecture} D. Christodoulou, {\it Mathematical problems of general relativity. I}, Zurich Lectures in Advanced Mathematics. European Mathematical Society (EMS), Z\"{u}rich, 2008.

\bibitem{C_reminiscences} D. Christodoulou, {\it Reminiscences of Yau}, ICCM Not. 7 (2019), no. 1, 9--11.

\bibitem{CKWWY_evol} P.-N. Chen, J. Keller, M.-T. Wang, Y.-K. Wang, and S.-T. Yau, {\it Evolution of angular momentum and center of mass at null infinity}, Comm. Math. Phys. 386 (2021), no. 1, 551--588.

\bibitem{CWWY_atmp} P.-N. Chen, M.-T. Wang, Y.-K. Wang and S.-T. Yau, {\it Supertranslation invariance of angular momentum}, arXiv:2102.03235
 
\bibitem{CWY_conserved} P.-N. Chen, M.-T. Wang, and S.-T. Yau, \textit{Conserved quantities in general relativity: from the quasi-local level to spatial infinity}, Comm. Math. Phys. \textbf{338} (2015), no.1, 31--80.

\bibitem{CWY_ATMP}  P.-N. Chen, M.-T. Wang\ and\ S.-T. Yau, \textit{Quasilocal angular momentum and center of mass in general relativity}, Adv. Theor. Math. Phys. {\bf 20} (2016), no.~4, 671--682.

\bibitem{CWY_sdg} P.-N. Chen and M.-T. Wang, \textit{Conserved quantities of harmonic asymptotic initial data sets}, Surveys in differential geometry 2015. One hundred years of general relativity, 227--248, Surv. Differ. Geom., 20, Int. Press, Boston, MA, 2015.

\bibitem{CWWY_BMS}  P.-N. Chen, M.-T. Wang, Y.-K. Wang and S.-T. Yau, {\it BMS charges without supertranslation ambiguity}, arXiv:2107.05316 

\bibitem{Dain-Moreschi} S. Dain; O. M. Moreschi, {\it General existence proof for rest frame systems in asymptotically flat spacetime}, Classical Quantum Gravity 17 (2000), no. 18, 3663--3672.

%\bibitem{DS} T. Dray\ and\ M. Streubel, \textit{Angular momentum at null infinity, Classical Quantum Gravity} {\bf 1} (1984), no.~1, 15--26.

\bibitem{KWY} J. Keller, Y.-K. Wang, and S.-T. Yau, \textit{Evaluating quasi-local angular momentum and center-of-mass at null infinity}, Adv. Theor. Math. Phys. {\bf 24}, no. 6 (2020), 1423--1473.

\bibitem{KS} S. Klainerman and J. Szeftel, \textit{Effective results on uniformization and intrinsic GCM spheres in perturbations of Kerr}, arXiv:1912.12195

\bibitem{MW} T. M\"adler and J. Winicour, \textit{Bondi-Sachs formalism}, Scholarpedia, 11 (12): 33528, 2016. arXiv:1609.01731

 
\bibitem{Rizzi_thesis} A. Rizzi, {\it Angular momentum in general relativity}, PhD thesis, Princeton University, June 1997.

\bibitem{Rizzi_PRL} A. Rizzi, \textit{Angular momentum in general relativity: a new definition}, Phys. Rev. Lett. {\bf 81} (1998), no.~6, 1150--1153.

\bibitem{S} R. K. Sachs, \textit{Gravitational waves in general relativity. VIII. Waves in asymptotically flat space-time}, Proc. Roy. Soc. London Ser. A {\bf 270} (1962), 103--126.
 
\bibitem{Wang-Yau1} M.-T. Wang and S.-T. Yau, \textit{ Quasi-local mass in general relativity}, Phys. Rev. Lett. {\bf 102} (2009), no. 2, no. 021101.
\bibitem{Wang-Yau2}  M.-T. Wang and S.-T. Yau, \textit{ Isometric embeddings into the Minkowski space and new quasi-local mass}, Comm. Math. Phys. {\bf 288} (2009), no. 3, 919--942. 
\bibitem{WY_spatial} M.-T. Wang, and S.-T. Yau, {\it Limit of quasilocal mass at spatial infinity.} Comm. Math. Phys. {\bf 296} (2010), no. 1, 271--283.

\end{thebibliography}
\end{document}